\tikzset{
     every place/.style={thick, minimum size=8mm},
     every transition/.style={thick, minimum size=6mm},
     pre/.style={<-,shorten <=1pt,>=stealth',thick},
     post/.style={->,shorten >=1pt,>=stealth',thick},
     round/.style={rounded corners=5pt},
     fire/.style={transition,fill=yellow}
}
\newcommand{\CTL}{\@ifnextchar*{\textup{CTL$^*$}}{\textup{CTL}}}
\newcommand{\N}{\mathbf N}
\newtheorem{definition}{Définition}
\newtheorem{proposition}{Proposition}  
\newtheorem{example}{Exemple}
\renewenvironment{proof}{\noindent\emph{Preuve.}}{\hfill\qed\bigskip}
\begin{document}
\creationEntete
\begin{resume}

  Les réseaux de Petri sont un formalisme simple 
  pour modéliser les calculs concurrents.
  Récemment, ils se sont révélés être un outil puissant
  de modélisation et d'analyse des réseaux de réactions 
  biochimiques, en comblant le fossé entre les
  modèles purement qualitatifs et quantitatifs. Les réseaux biologiques
  peuvent être larges et complexes, ce qui rend
  leur analyse difficile.  Dans cet article, nous nous concentrons sur
  deux propriétés structurelles des réseaux de Petri: les
  siphons et les pièges, qui nous apportent des informations sur la
  persistance de certaines espèces biochimiques.  Nous présentons
  deux méthodes pour énumérer les siphons et les pièges minimaux
  d'un réseau de Petri en itérant la résolution d'un problème
  booléen, interprété comme un programme SAT ou PLC(B). Nous comparons les
  performances de ces méthodes avec un algorithme dédié qui
  représente l'état de l'art dans la communauté des réseaux de
  Petri.
  Nous montrons que les programmes SAT et PLC(B) sont plus efficaces.
Nous analysons pourquoi ces programmes sont si performants 
sur les modèles de l'entrepôt de modèles \href{http://www.biomodels.net/}{biomodels.net}
et nous proposons des instances difficiles pour le problème de l'énumération des siphons minimaux.

\end{resume}

\begin{abstract}
Petri-nets are a simple formalism for modeling concurrent computation. 
Recently, they have emerged as a powerful tool for the modeling and analysis of
biochemical reaction networks,
bridging the gap between purely qualitative and quantitative models.
These networks can be large
and complex, which makes their study difficult and computationally challenging.
In this paper, we focus on two
structural properties of Petri-nets, siphons and traps, that
bring us information about the persistence of some molecular species. 
We present two methods for enumerating all minimal siphons
and traps of a Petri-net by iterating the resolution of a boolean
model interpreted as either a SAT or a CLP(B) program.
We compare the performance of these methods with a state-of-the-art
dedicated algorithm of the Petri-net community. 
We show that the SAT and CLP(B) programs are both faster. 
We analyze why these programs perform so well on the models of the
repository of biological models biomodels.net,
and propose some hard instances for the problem of minimal siphons enumeration.

\end{abstract}

\section{Introduction}

Les réseaux de Petri ont été introduits dans les années 60 comme un formalisme simple
pour décrire et analyser les systèmes concurrents, asynchrones, non déterministes et éventuellement distribués.

L'utilisation des réseaux de Petri pour représenter les modèles de réactions biochimiques,
en formalisant les espèces moléculaires par les places et les réactions par les transitions,
a été introduite assez tardivement dans\, \cite{RML93ismb} ,
en proposant certains concepts et outils des réseaux de Petri pour l'analyse de ces
réseaux biochimiques.
Dans\, \cite{Soliman08wcb}, un programme logique avec contraintes sur domaines finis (PLC(DF))
est proposé pour le calcul des P-invariants.
Ces invariants fournissent des lois de conservation structurelles qui peuvent être utilisées
pour réduire la dimension des équations différentielles ordinaires (EDO) associées à un modèle de réactions
avec expressions cinétiques.

Dans ce papier, nous considérons les concepts de siphons et 
pièges des réseaux de Petri.
Ces structures sont liées au comportement dynamique du réseau: la vérification de l'accessibilité (est-ce que le système peut atteindre un état
donné) et à la vivacité (absence de blocages).
Ces propriétés ont été déjà utilisées pour l'analyse des
réseaux métaboliques \cite{ZS03insilicobio}.
Un programme linéaire en nombres entiers est proposé dans\, \cite{CFP02ieee} 
et un algorithme dédié est décrit dans\,\cite{CFP05ieee} pour les calculer.

Un siphon est un ensemble de places qui, une fois non marqué, le reste.
Un piège est un ensemble de places qui, une fois marqué, ne peut jamais perdre tous ses jetons.
Un exemple typique de siphon est un ensemble de métabolites qui sont progressivement réduits
au cours d'une famine;
un exemple typique de piège est l'accumulation de métabolites
qui sont produits au cours de la croissance d'un organisme.

Dans cet article, après quelques préliminaires sur les réseaux de Petri, les siphons et les
pièges, nous donnons un modèle booléen simple de ces propriétés.
Nous décrivons deux méthodes pour énumérer l'ensemble des siphons et des pièges minimaux
et nous les comparons avec un algorithme dédié qui représente l'état de l'art 
\cite{CFP05ieee},
sur un banc d'essai composé à la fois du dépôt  \href{http://www.petriweb.org/}{Petriweb}
\cite{GHPW06icatpn} et de l'entrepôt de modèles \href{http://www.biomodels.net/}{biomodels.net}
\cite{NBBCDDLSSSSH06nar}.
Dans la dernière section, 
nous expliquons pourquoi les programmes proposés sont très performants sur les modèles biochimiques de l'entrepôt Biomodels.net et proposons certaines instances difficiles pour le problème de l'énumération des siphons minimaux.
\section{Préliminaires}
\label{sec:prelim}
\subsection{Réseaux de Petri}
\label{sec:pn}

Un réseau de Petri $PN$ est un graphe biparti orienté $ PN = (P, T, W) $,
où $ P $ est un ensemble fini de sommets appelés places, $ T $ est un ensemble fini de sommets (disjoint de $ P $) appelés transitions
et $ W: ((P \times T) \cup (T \times P)) \to \N $
représente un ensemble d'arcs orientés et pondérés par des entiers
(le poids zéro représente l'absence d'arc).
Un marquage d'un graphe de réseau de Petri est une fonction $m: P \to \N$, qui affecte
un certain nombre de jetons à chaque place.
Un réseau de Petri marqué est un 4-tuple $ (P, T, W, m_0) $ où $ (P, T, W) $ est un réseau de Petri et 
$m_0 $ est un marquage initial. 

\newcommand*{\Pred}[1]{{^\bullet}#1}
\newcommand*{\Succ}[1]{#1^\bullet}

L'ensemble des prédécesseurs (resp.~successeurs) d'une transition $ t \in T $ est l'ensemble
des places $ \Pred t = \{p \in P \mid W(p, t)> 0 \} $
(resp.~$\Succ t =\{p \in P \mid W(t,p)>0\}$).
De même, l'ensemble des prédécesseurs (resp.~successeurs) d'une place $ p \in P $ est l'ensemble de transitions
$\Pred p =\{t \in T \mid W(t,p)>0\}$
(resp.~$\Succ p =\{t \in T \mid W(p,t)>0\}$).

Pour tous marquages $m, m': P \to \N$ et toute transition $t\in T$, 
il y a une étape de transition $m \overset t \rightarrow m'$,
si et seulement si pour tout $ p \in P $,
$ m(p) \geq W (p, t) $ et $ m'(p) = m(p)-W (p, t) + W (t, p) $.

Cette notation reste valable pour une séquence de transitions $\sigma=(t_0 \dots t_n)$
en écrivant
$m \overset \sigma \rightarrow m'$
si 
$m \overset {t_0} \rightarrow m_1
 \overset {t_1} \rightarrow \dots
 \overset {t_{n-1}} \rightarrow  m_{n}
\overset {t_n} \rightarrow m'$
pour les marquages $m_1, \dots, m_{n}$.

La représentation classique d'un modèle de réactions par un réseau de Petri 
consiste à associer aux espèces chimiques des $places$ et aux réactions des $transitions$. 

\begin{example}
  \label{ex:enz}
 Le modèle réactionnel de la réaction enzymatique de Michaelis-Menten
  \lstinline|A +  E <=> AE => B +  E|
  correspond au réseau de Petri de la Figure~\ref{fig:enz}.
\end{example}

\begin{figure}[htb]
\begin{center}
\begin{tikzpicture}
  \node[place,tokens=3] (A) at (1, 1.5) {};
  \node(placeA)at  (0.4,1.8){A};
  \node[place] (B) at (7, 1.5) {B};
  \node[place,tokens=2] (E) at (2.5, 4) {};
    \node(placeE)at  (1.9,4.3){E};
  \node[place] (AE) at (4, 1.5) {AE};

  \node[transition] (r1) at (2.5, 2.5) {$t_1$}
     edge[pre] (A)
     edge[pre] (E)
     edge[post] (AE);
  \node[transition] (r2) at (2.5, 0.5) {$t_{-1}$}
     edge[pre] (AE)
     edge[post] (A);
  \node[transition] (r3) at (5.5, 1.5) {$t_2$}
     edge[pre] (AE)
     edge[post] (B);

  \draw[round, post] (r2) -- (0, 0.5) -- (0, 4) -- (E);
  \draw[round, post] (r3) -- (5.5, 4) -- (E);
\end{tikzpicture}
\end{center}
\caption{Modèle biochimique de l'Exemple~\ref{ex:enz},
représenté comme un réseau de Petri avec un marquage validant \(t_1\).}
\label{fig:enz}
\end{figure}
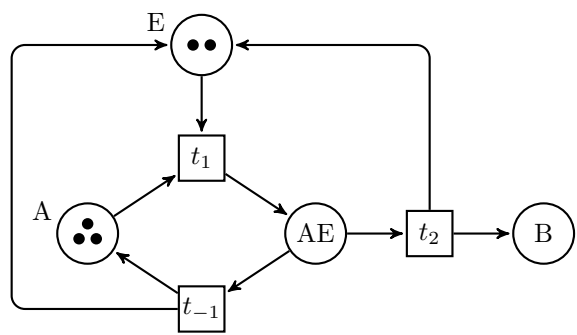
\subsection{Siphons et Pièges}

Soit $PN = (P, T, W)$ un réseau de Petri.

\begin{definition}

Un piège est un ensemble non vide de places $P' \subseteq P$
tels que ses successeurs sont aussi des prédécesseurs: $\Succ{P'} \subseteq \Pred{P'}$.

Un siphon est un ensemble non vide de places $P' \subseteq P$
tel que ses prédécesseurs sont aussi des successeurs. $\Pred{P'} \subseteq \Succ{P'}$.
\end{definition}

Remarquons qu'un siphon dans $PN$ est un piège dans le réseau de Petri dual 
obtenu en inversant la direction de tous les arcs de $PN$.
Notons aussi que puisque les successeurs (resp.\ prédécesseurs) d'une union 
sont l'union des successeurs (resp.\ prédécesseurs), l'union de deux siphons (resp.\ pièges) est un siphon (resp.\ piège).

Les propositions suivantes montrent que les pièges et les siphons donnent 
une caractérisation structurelle de certaines propriétés dynamiques des marquages.

\begin{proposition}\cite{Peterson81book}
Pour tout sous-ensemble de places $P'\subseteq P$,
$P'$ est un piège si et seulement si pour tout marquage
$m \in \N^P$ tel que $m_p \geq 1$ pour certaines places $p \in P'$, et pour tout marquage $m'\in \N^P$ tel que
$m\overset{\sigma}{\rightarrow} m'$
pour une séquence de transitions $\sigma$,
il existe une place $p' \in P'$ telle que $m'_{p'} \geq 1$.
\end{proposition}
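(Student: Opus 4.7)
The plan is to prove the equivalence by reducing each direction to a single firing step.

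For the forward implication, I would proceed by induction on the length of the sequence $\sigma$; the empty sequence is immediate. The inductive step reduces to the one-step claim: if $m \overset{t}{\rightarrow} m'$ and $m(p) \geq 1$ for some $p \in P'$, then $m'(q) \geq 1$ for some $q \in P'$. I would split on whether $t \in \Succ{P'}$. If not, then $W(p,t) = 0$ for every $p \in P'$, so $m'(p) = m(p) + W(t,p) \geq m(p)$ on all of $P'$ and the original witness survives. Otherwise, the trap hypothesis $\Succ{P'} \subseteq \Pred{P'}$ furnishes some $q \in P'$ with $W(t,q) \geq 1$; combined with fireability ($m(q) \geq W(q,t)$), this yields $m'(q) = m(q) - W(q,t) + W(t,q) \geq 1$.

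For the converse, I would argue by contraposition. Suppose $P'$ is not a trap; then some transition $t$ lies in $\Succ{P'} \setminus \Pred{P'}$, so there exists $p^{*} \in P'$ with $W(p^{*},t) \geq 1$ while $W(t,q) = 0$ for every $q \in P'$. Defining the marking $m(p) \coloneqq W(p,t)$ for all $p \in P$ makes $t$ fireable at $m$ and ensures $m(p^{*}) \geq 1$, so $P'$ is marked. Firing $t$ yields $m'(q) = W(t,q) = 0$ for every $q \in P'$, refuting the dynamical property with the one-step sequence $\sigma = t$.

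The only subtle point is the case $t \in \Succ{P'}$ of the forward direction: one must ensure that a token produced in $P'$ actually persists even though $t$ may also consume from $P'$. This is handled cleanly by the fireability inequality $m(q) \geq W(q,t)$, which guarantees $m(q) - W(q,t) \geq 0$, so adding $W(t,q) \geq 1$ places at least one token at $q$ in $m'$.
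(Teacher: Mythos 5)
Your proof is correct: the induction on the length of $\sigma$ reduced to a one-step case analysis on whether $t \in \Succ{P'}$, and the contrapositive construction of the marking $m(p) = W(p,t)$ for the converse, are exactly the standard textbook argument. The paper itself gives no proof of this proposition --- it is cited from Peterson's book --- so there is nothing to compare against; your argument fills that gap correctly, and the key subtlety (a token produced by $t$ in $P'$ survives because fireability gives $m(q) - W(q,t) \geq 0$) is handled properly.
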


\begin{proposition}\cite{Peterson81book}
Pour tout sous-ensemble de places $P'\subseteq P$, $P'$ est un siphon si et seulement si
pour tout marquage $m \in \N^P$ avec $m_p = 0$ pour tout $p \in P'$,
et tout marquage $m'\in \N^P$ tel que 
$m\overset{\sigma}{\rightarrow} m'$
pour une séquence de transitions $\sigma$,
nous avons $m'_{p'} = 0$ pour tout $p' \in P'$.
\end{proposition}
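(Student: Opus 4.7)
Le plan est de prouver les deux directions séparément, la première par induction sur la longueur de la séquence de transitions, la seconde par contraposée en exhibant un marquage témoin.

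Pour le sens direct, je supposerais que $P'$ est un siphon, c'est-à-dire $\Pred{P'} \subseteq \Succ{P'}$, et je fixerais un marquage $m$ tel que $m_p = 0$ pour tout $p \in P'$. Je procéderais alors par récurrence sur la longueur de $\sigma$. Le cas de base ($\sigma$ vide) est immédiat. Pour le pas inductif, en décomposant $m \overset{\sigma'}{\rightarrow} m'' \overset{t}{\rightarrow} m'$, l'hypothèse de récurrence donne $m''_p = 0$ pour tout $p \in P'$. Il suffit alors d'observer que si $t$ ajoutait un jeton à une place $p \in P'$, alors $t$ appartiendrait à $\Pred{P'}$, donc à $\Succ{P'}$ par hypothèse de siphon, et donc $t$ consommerait un jeton d'une place $p'' \in P'$. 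Mais l'activation de $t$ depuis $m''$ exige $m''_{p''} \geq W(p'',t) > 0$, ce qui contredit $m''_{p''}=0$. Donc $m'_p = 0$ pour tout $p \in P'$.

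Pour la réciproque, je raisonnerais par contraposée. Supposons que $P'$ ne soit pas un siphon: il existe une transition $t \in \Pred{P'} \setminus \Succ{P'}$, c'est-à-dire une place $p \in P'$ avec $W(t,p) > 0$ tandis que $W(p',t) = 0$ pour tout $p' \in P'$. Je construirais alors un marquage témoin $m$ défini par $m_{p'} = 0$ pour $p' \in P'$ et $m_q = W(q,t)$ pour $q \notin P'$. Cette définition rend $t$ tirable depuis $m$ (car aucune place de $P'$ n'est prédécesseur de $t$), et le marquage atteint $m'$ satisfait $m'_p = W(t,p) \geq 1$, violant la propriété voulue.

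Je ne vois pas de réelle difficulté technique: l'étape la plus délicate est d'articuler correctement la construction du marquage témoin dans la réciproque, en veillant à bien séparer les places de $P'$ (laissées vides) des autres (dotées d'exactement ce qu'il faut pour activer $t$). Ce calcul repose essentiellement sur la définition sémantique d'une étape de transition $m \overset{t}{\rightarrow} m'$ rappelée en section~\ref{sec:pn}.
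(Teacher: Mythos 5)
The paper offers no proof of this proposition: it is imported verbatim from Peterson's book with only the citation \cite{Peterson81book}, so there is nothing internal to compare against. Your argument is correct and is the standard one — induction on the length of the firing sequence for the direct implication, and an explicit witness marking (empty on $P'$, exactly $W(q,t)$ elsewhere) for the contrapositive of the converse. The only caveat is marginal and lies in the statement rather than in your proof: the proposition quantifies over \emph{all} subsets $P'\subseteq P$ while the paper's definition of siphon requires $P'$ non vide, so for $P'=\emptyset$ the marking condition holds vacuously although $\emptyset$ is not a siphon; your contrapositive step (``il existe $t\in\Pred{P'}\setminus\Succ{P'}$'') silently assumes $P'\neq\emptyset$, which is the usual convention.
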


Un exemple d'utilisation des siphons et des pièges a été introduit dans\,\cite{ZS03insilicobio}:
la plante de la pomme de terre produit l'amidon et l'accumule pendant la 
croissance pour ensuite le consommer après la récolte.
Le réseau de Petri correspondant est représenté par la
Figure~\ref{fig:ZS03insilicobio},
où $S_1$ représente le glucose-1-phosphate,
$S_2$ est UDP-glucose et $S_3$ est l'amidon\,\cite{Stryer95book}.
Dans ce modèle, l'une des deux branches, soit la branche produisant l'amidon ($t_3$ et $t_4$), soit la branche le consommant ($t_5$ et $t_6$), est opérationnelle.
$P_1$ et $P_2$ représentent des métabolites externes.

Il est facile d'observer que l'ensemble $\{S_2, S_3\}$ est un piège lorsque $t_3$
et $t_4$ sont opérationnelles et que $\{S_3, S_4\}$ est un siphon lorsque $t_5$ et
$t_6$ sont opérationnelles: une fois un jeton arrive dans $S_3$, aucune transition ne peut être franchie et le jeton y reste 
indépendamment de l'évolution du système.
D'autre part, une fois le dernier jeton est consommé de $S_3$ et $S_4$, 
aucune transition ne pourra générer un nouveau jeton dans ces places qui demeureront vides.

Dans beaucoup de cellules contenant de l'amidon, ce dernier et certains de ces prédécesseurs forment des
pièges, alors que l'amidon et certains successeurs forment des siphons.
En effet, soit la branche produisant l'amidon, soit la branche qui le consomme, est opérationnelle et ceci est réalisé par
une inactivation complète des enzymes appropriées.

\begin{figure}
\begin {center}
      \begin{tikzpicture}[node distance=9mm and 18mm,on grid]
 \node[place] (s1) {$S_1$};       
  \node[place] (p1) [left=of s1, xshift=-2cm]{$P_1$};
        
         \node[place] (p2) [right=of s1, xshift=2cm] {$P_2$};
         \node[place] (s2) [above right=of s1, yshift=1cm] {$S_2$};
			\node[place] (s3) [above=of s1, yshift=3cm] {$S_3$};
         \node[place] (s4) [above left=of s1,  yshift=1cm] {$S_4$};

         \node[transition] (r1) [right=of p1] {$t_1$}
         edge[pre] (p1)
         edge[post, bend left] (p1)   
         edge[post] (s1);   

         \node[transition] (r2) [right=of s1] {$t_2$}
         edge[pre] (s1)
			edge[pre, bend right] (p2)
			edge[post] (p2)
         ;   

         \node[transition] (r3) [above right=of s1,xshift=-1cm] {$t_3$}
         edge[pre] (s1)
         edge[post] (s2);   

         \node[transition] (r4) [above left=of s2,xshift=1cm] {$t_4$}
         edge[pre] (s2)
         edge[post] (s3);   

          \node[transition] (r5) [above right=of s4,xshift=-1cm] {$t_5$}
         edge[pre] (s3)
         edge[post] (s4);  

         \node[transition] (r6) [above left=of s1,xshift=1cm] {$t_6$}
         edge[pre] (s4)
         edge[post] (s1);

      \end{tikzpicture}
   \end{center}
  \caption{Réseau de Petri de\,\protect\cite{ZS03insilicobio} modélisant la croissance de la plante de la pomme de terre.}
   \label{fig:ZS03insilicobio}
\end{figure}
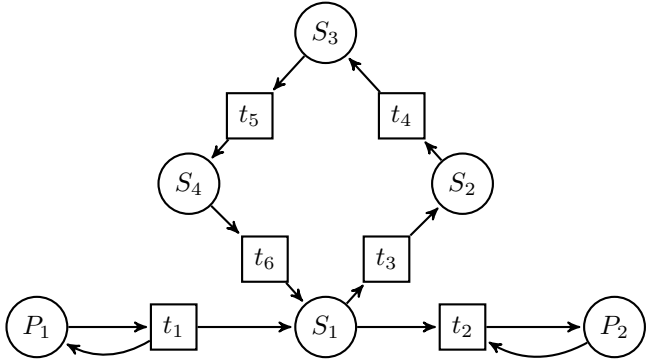

\subsection{Minimalité}

Un siphon (resp.~un piège) est minimal s'il ne contient pas d'autre siphon (resp.~piège).
Malgré la stabilité des siphons et des pièges par union, notons que 
les siphons minimaux ne forment pas un ensemble générique de tous les siphons.

\begin{figure}[htb]
\begin {center}
      \begin{tikzpicture}[node distance=10mm and 2cm,on grid]
         \node[place] (A) {A};
         \node[place] (B) [right=of A] {B};
         \node[place] (C) [right=of B, xshift=1cm] {C};
         \node[place] (D) [right=of C] {D};

         \node[transition] (r1) [above right=of A,xshift=-1cm] {$r_1$}
         edge[pre] (A)
         edge[post] (B);   

         \node[transition] (r2) [below right=of A,xshift=-1cm] {$r_2$}
         edge[pre] (B)
         edge[post] (A);   

         \node[transition] (r3) [right=of B, xshift=-0.5cm] {$r_3$}
         edge[pre] (B)
         edge[post] (C);   

         \node[transition] (r4) [above right=of C,xshift=-1cm] {$r_4$}
         edge[pre] (C)
         edge[post] (D);   

         \node[transition] (r5) [below right=of C,xshift=-1cm] {$r_5$}
         edge[pre] (D)
         edge[post] (C);
      \end{tikzpicture}
   \end{center}
   \caption{Réseau de Petri de l'exemple~\ref{ex:siphon/trap}.}
\label{fig:siphontrap}
\end{figure}
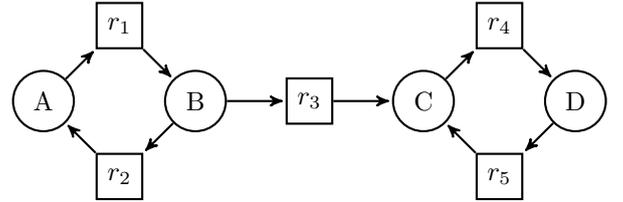 

\begin{example}
\label{ex:siphon/trap}
 Dans le réseau de Petri de la Figure~\ref{fig:siphontrap}, $\{A,B\}$ est
 un siphon minimal: $  {\Pred \{A,B\}}=\{r_1,r_2\} \subset \Succ
{\{A,B\}} =\{r_1,r_2,r_3\}$. $\{C,D\}$ est un piège minimal: $ \Succ {\{C,D\}}
=\{r_4,r_5\} \subset {\Pred \{C,D\}}=\{r_3,r_4,r_5\}$.
\end{example}

Un siphon est générique s'il ne peut pas être représenté sous la forme d'une union d'autres 
siphons \cite{Murata89ieee}. 
Un siphon minimal est un siphon générique, mais un siphon générique n'est pas forcément minimal.
Prenons le cas de l'exemple \ref{ex:siphon/trap},
les siphons sont $\{A,B\}$ et $\{A,B,C,D\}$:
mais seul le premier est minimal, le second ne pourra pas être écrit comme une union de siphons minimaux.

Une motivation pour étudier les siphons minimaux est qu'ils procurent une condition suffisante pour la
non-existence de blocages.
En effet, on a prouvé que dans un réseau de Petri bloqué (i.e.~où aucune transition ne peut être franchie),
toutes les places non-marquées forment un siphon\,\cite{CXra97}.
Ainsi, l'approche basée sur les siphons pour la détection des blocages vérifie si le réseau
contient un siphon propre (un siphon est propre si l'ensemble de ses prédécesseurs est strictement inclus dans l'ensemble de ses successeurs) peut devenir non marqué par une séquence de franchissement.
Un siphon propre ne devient pas non marqué s'il contient un piège initialement marqué.
Si un tel siphon est identifié, le marquage initial est modifié par la séquence de franchissement
et la vérification continue pour les siphons restants jusqu'à ce qu'un blocage soit identifié ou jusqu'à ce que la vérification est terminée.
Considérer uniquement l'ensemble des siphons minimaux est suffisant puisque si un siphon devient non marqué durant l'analyse
alors au moins un siphon minimal doit aussi être non marqué.

D'autres liens avec les propriétés comportementales de vivacité sont établis 
dans\,\cite{HGD08sfm}.

\subsection{Complexité}
Décider si un réseau de Petri contient un siphon ou un piège
et en donner un s'il existe est polynomial\,\cite{CFP03ieee}.
Cependant, le problème de décision de l'existence d'un siphon minimal contenant
une place donnée est NP-difficile\,\cite{TYW96ieice}.
De plus, il peut y avoir un nombre exponentiel de siphons et de pièges dans un réseau de Petri 
comme le montre l'exemple suivant.

\begin{example}
\label{ex:expnb}
Dans le réseau de Petri définie par les équations
\lstinline|A1 + B1 => A2 + B2|,
\lstinline|A2 + B2 => A3 + B3|,
\dots,
\lstinline|An + Bn => A1 + B1|.
et représenté par la Figure \ref{fig:exponentialnb},
il y a $2^n$ siphons minimaux et $2^n$ pièges minimaux,
chacun contenant soit $Ai$ soit $Bi$ mais pas les deux en même temps, pour tout $i$.
\end{example}

\begin{figure}
\begin{center}

{\tt
\begin{tikzpicture}[every node/.style={}]
   \matrix[row sep=3mm, column sep=3mm] {
   \node[place] (A1) {A1}; & &
   \node[place] (A2) {A2}; & &
   \node[place] (A3) {A3}; & &
   \node[place] (An) {An}; & \\
   & \node[transition] (T1) {}; &
   & \node[transition] (T2) {}; &
   & \node (T3) {\dots}; &
   & \node[transition] (Tn) {}; \\
   \node[place] (B1) {B1}; & &
   \node[place] (B2) {B2}; & &
   \node[place] (B3) {B3}; & &
   \node[place] (Bn) {Bn}; & \\
   };

   \draw[pre] (T1) -- (A1);
   \draw[pre] (T1) -- (B1);
   \draw[post] (T1) -- (A2);
   \draw[post] (T1) -- (B2);
   \draw[pre] (T2) -- (A2);
   \draw[pre] (T2) -- (B2);
   \draw[post] (T2) -- (A3);
   \draw[post] (T2) -- (B3);
   \draw[pre] (T3) -- (A3);
   \draw[pre] (T3) -- (B3);
   \draw[post] (T3) -- (An);
   \draw[post] (T3) -- (Bn);
   \draw[pre] (Tn) -- (An);
   \draw[pre] (Tn) -- (Bn);

   \draw[round, post] (Tn) -- ++(0, 2) -| (A1.north);
   \draw[round, post] (Tn) -- ++(0, -2) -| (B1.south);
\end{tikzpicture}
}
\end{center}
\caption{ Réseau de Petri du modèle de l'exemple~\ref{ex:expnb}.}
\label{fig:exponentialnb}
\end{figure}
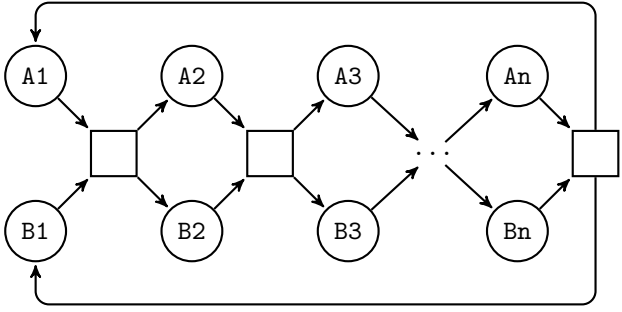

\section{Modèle booléen}
\label{sec:boolean-model}
Dans la littérature, plusieurs algorithmes ont été proposé pour le calcul des siphons et des pièges
minimaux d'un réseau de Petri.
Puisque un siphon dans un réseau de Petri $N$ est un piège dans le réseau dual $N'$, il suffit de traiter les siphons, les pièges sont obtenus par dualité.
Certains algorithmes sont basés sur les inégalités\,\cite{Murata89ieee}, les équations logiques\,\cite{KM86nettheory,MB90dam}, ou 
des approches algébriques\,\cite{L87cnapn}.
Des méthodes plus récentes ont été présentées dans\,\cite{CFP03ieee, CFP05ieee}.

Le problème de recherche d'un siphon peut être naturellement décrit par 
un modèle booléen représentant l'appartenance ou non de chaque place au siphon recherché.
L'énumération de tous les siphons peut être codée comme une procédure de recherche itérative,
similaire à l'approche branch-and-bound grâce à une propriété donnée ci-dessous.

Pour un réseau de Petri contenant $n$ places et $m$ transitions, un siphon $S$ est un ensemble de 
places tel que ses prédécesseurs sont aussi successeurs. $S$ peut être représenté par un vecteur 
$\vec {V}$ de $\{0,1\}^n$ tel que pour tout $i \in \{1,2,..,n\}$,
$V_i = 1$ si et seulement si $p_i \in S$.

La contrainte de siphon peut être formulée comme suit:
\begin{center}
 $  \forall i, V_i=1 \Rightarrow (\forall t \in T, t \in  \Pred
{p_i}
 \Rightarrow  t \in (\cup_{V_j=1} \Succ {\{p_j\})})$.
\end{center}
Cette contrainte est équivalente à:
\begin{center}
 $  \forall i, V_i=1 \Rightarrow \Pred{p_i} \subseteq
\Succ{(\bigcup_{V_j=1} \{p_j\})} $
\end{center}
ce qui peut être écrit sous la forme:
\begin{center}
$  \forall i, V_i=1 \Rightarrow \bigwedge_{t \in
\Pred{p_i}} (\bigvee_{p_j \in \Succ{t}} V_j=1)$.
\end{center}


Finalement, afin d'exclure l'ensemble vide, la contrainte suivante est ajoutée

\begin{center}
$ \bigvee_i V_i=1$.
\end{center}

Énumérer uniquement les siphons minimaux (vis-à-vis de l'inclusion ensembliste)
peut être assuré par la stratégie de recherche et l'ajout de nouvelles contraintes.

La stratégie trouve un siphon minimal vis-à-vis de l'ordre d'inclusion ensembliste et
une nouvelle contrainte est ajoutée à chaque fois qu'un tel siphon est trouvé afin d'interdire 
les siphons le contenant dans la suite de la recherche.

Dans une approche précédente\,\cite{Nabli11cp}, pour la méthode basée sur la programmation avec contraintes, l'ordre d'inclusion était assuré 
par un étiquetage (labeling) sur une variable cardinalité dans un ordre croissant.
Étiqueter directement sur les variables booléennes, par valeur croissante
(d'abord \(0\), puis \(1\)), se révèle être plus efficace, plus facile à appliquer,
et garantit
que les siphons sont trouvés dans l'ordre d'inclusion ensembliste, grâce à la proposition suivante.
\begin{proposition}
Soit un arbre binaire tel que, dans chaque nœud instanciant une variable \(X\)
l'arc gauche poste la contrainte \(X = 0\) et l'arc droit poste la contrainte \(X = 1\),
alors pour toutes les feuilles distinctes \(A\) et \(B\), la feuille \(A\) est à gauche de la feuille
\(B\) seulement si l'ensemble représenté par \(B\) n'est pas inclus dans l'ensemble représenté par \(A\)
(c'est-à-dire, il existe une variable \(X\) telle que \(X_B > X_A\),  
où \(X_A\) et \(X_B\) dénotent les valeurs instanciées à \(X\) dans le trajet 
menant à \(A\) et \(B\) respectivement).
\end{proposition}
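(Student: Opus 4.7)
Le plan est de raisonner à partir du plus proche ancêtre commun (PPAC) des deux feuilles. Puisque \(A\) et \(B\) sont distinctes, leurs chemins depuis la racine coïncident jusqu'à un certain nœud interne \(N\) puis divergent. Par hypothèse (« \(A\) à gauche de \(B\) »), \(A\) descend par l'arc gauche de \(N\) et \(B\) par l'arc droit, suivant la convention standard de l'ordre induit par un parcours en profondeur de gauche à droite de l'arbre.

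Soit alors \(X\) la variable instanciée au nœud \(N\). Par définition de la stratégie de branchement, l'arc gauche poste la contrainte \(X = 0\) et l'arc droit poste \(X = 1\). Comme chaque variable n'est instanciée qu'une seule fois le long d'une branche de l'arbre de recherche, la seule valeur affectée à \(X\) sur le trajet menant à \(A\) est \(0\) (donc \(X_A = 0\)), et la seule valeur affectée à \(X\) sur le trajet menant à \(B\) est \(1\) (donc \(X_B = 1\)). On a bien \(X_B > X_A\) et, de façon équivalente, la place associée à \(X\) appartient à l'ensemble représenté par \(B\) mais pas à celui représenté par \(A\), ce qui entraîne que l'ensemble représenté par \(B\) n'est pas inclus dans celui représenté par \(A\).

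La principale subtilité est essentiellement de mise au point plutôt que technique : il faut fixer proprement la convention reliant la position relative « à gauche de » à la structure de l'arbre binaire, c'est-à-dire expliciter que cette relation équivaut exactement, pour deux feuilles distinctes, à ce que la première descende dans le sous-arbre gauche et la seconde dans le sous-arbre droit de leur PPAC. Une fois cette convention posée, le reste de la démonstration n'est qu'une lecture directe des contraintes postées le long des arcs et de l'unicité de l'instanciation de chaque variable booléenne sur chaque branche.
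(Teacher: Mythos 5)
Votre preuve est correcte et suit essentiellement le même raisonnement que celle de l'article : identifier le nœud ancêtre commun où les chemins divergent, observer que la variable \(X\) qui y est instanciée reçoit \(0\) sur le trajet vers \(A\) et \(1\) sur celui vers \(B\), d'où \(X_B > X_A\). Vous explicitez simplement davantage la convention « à gauche de » et l'unicité de l'instanciation par branche, ce qui ne change pas la substance de l'argument.
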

\begin{proof}
\(A\) et \(B\) ont au moins un nœud ancêtre en commun instanciant une variable \(X\).
Si la feuille \(A\) est à gauche de la feuille \(B\), l'arc menant à \(A\) est à gauche avec la contrainte \(X = 0\)
et l'arc menant à \(B\) est à droite avec la contrainte \(X = 1\),
par conséquent \(X_B > X_A\).
\end{proof}

À chaque fois qu'un siphon \((S_i)\) est trouvé, la contrainte \(\bigvee_{i \mid S_i = 1} V_i = 0\)
doit être ajoutée au modèle pour garantir que les sur-ensembles de ce siphon ne seront pas énumérés.

\section{Algorithmes}
\label{sec:booleanmodel}
Cette section décrit deux implémentations du  modèle précédent ainsi que deux stratégies de recherche,
une utilisant une procédure SAT itérée et l'autre basée sur la programmation avec contraintes.

\subsection{Algorithme SAT itéré}
Le modèle booléen peut être directement interprété en utilisant un solveur SAT pour 
vérifier l'existence d'un siphon ou un piège.

Nous utilisons \href{http://www.sat4j.org/}{sat4j}, la bibliothèque de satisfiabilité et d'optimisation booléenne pour Java qui fournit une collection de solveurs SAT efficaces.
 Elle inclue une implémentation des spécifications MiniSAT en Java.

Pour le réseau de Petri de la figure~\ref{fig:enz} représentant la réaction enzymatique de l'exemple~\ref{ex:enz}, nous avons le codage suivant: chaque ligne est une liste de variables séparées par des espaces, elle représente une clause.
Une valeur positive signifie que la variable correspondante est sous la forme positive (donc $2$ signifie $V_2$),
et une valeur négative signifie la négation de la variable (donc $-3$ signifie $-V_3$). 
Dans cet exemple, les variables 1, 2, 3 and 4 correspondent respectivement à $E$, $A$, $AE$ et
$B$.
Dans la première itération, le problème est de résoudre les clauses suivantes:\\
-2 3\\
-3 1 2\\
-1 3\\
-1 3\\
-4 3\\
Le problème est satisfiable avec les valeurs: -1, 2, 3, -4 ce qui signifie que $\{A,
AE\}$ est un siphon minimal.
Pour assurer la minimalité, la clause -2 -3 est ajoutée et le programme itère
une autre fois. Le problème est satisfiable avec les valeurs 1, -2, 3, -4, ce qui signifie que $\{E, AE\}$ est aussi un siphon minimal.
Une nouvelle clause est ajoutée précisant que soit $E$ soit $AE$ n'appartient pas au siphon et
aucune affectation des variable ne peut satisfaire le problème.

Ainsi, ce modèle contient 2 siphons minimaux: $\{A, AE\}$ et $\{E, AE\}$.

L'enzyme $E$ est une protéine qui catalyse (i.e., augmente la vitesse de) la 
transformation du substrat $E$ en produit $B$.
L'enzyme est conservée dans une réaction chimique. 

Les résultats obtenus avec cette procédure SAT itérée sont très bons, comme le détaille la section~\ref{sec:eval}.

\subsection{Algorithme PLC(B)}

La recherche de siphons peut aussi être implémentée avec un Programme Logique avec Contraintes
Booléennes (PLC(B)).
Nous utilisons \href{http://www.gprolog.org/}{GNU-Prolog}~\cite{DC01jflp}
pour l'efficacité de ses propagateurs de contraintes booléennes.

La stratégie d'énumération est une variation du \emph{branch-and-bound},
où, à chaque fois qu'un nouveau siphon est trouvé, la recherche doit trouver un non sur-ensemble de ce siphon.

Nous avons essayé deux variantes de branch-and-bound: avec et sans relance.
Dans le branch-and-bound avec relance, il se révèle essentiel de choisir
une méthode de sélection des variables diversifiante.
En effet, les méthodes d'énumération avec un ordre fixe de variables accumulent les échecs
en essayant toujours d'énumérer les mêmes ensembles en premier, qui sont élagués plus tard
par les contraintes de non sur-ensembles.
L'arbre devient ainsi de plus en plus dense à chaque itération puisque plusieurs 
échecs précédents sont explorés à nouveau.
Une sélection aléatoire des variables assure une bonne diversité, tout comme une méthode
qui énumère d'abord sur les variables correspondant aux places figurant dans les siphons déjà trouvés.


\begin{figure*}[htb]
\begin{center}
  \includegraphics[width=\textwidth]{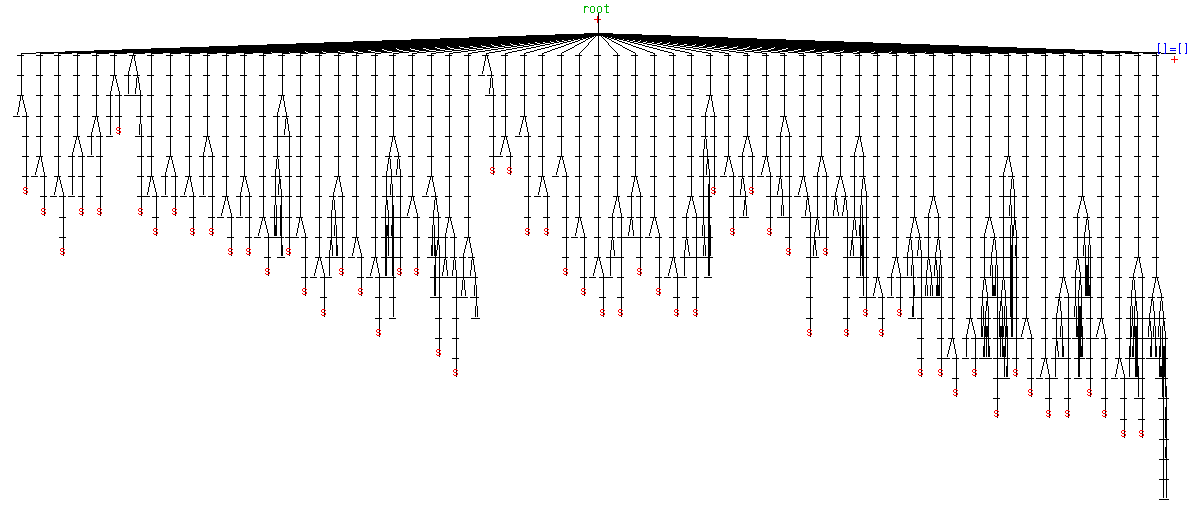}
\end{center}
\caption{Arbre de recherche développé avec la stratégie de backtrack sans relance\label{fig:br}
pour le calcul de tous les siphons minimaux du réseau 1454 de PetriWeb.}
\end{figure*}

Branch-and-bound sans relance donne de meilleurs performances à condition de prendre le soin de 
poster chaque contrainte de non sur-ensembles une seule fois (les reposer toutes à chaque backtrack est inefficace).
Cette stratégie est implémentée de la manière suivante:
à chaque fois qu'un siphon est trouvé, le chemin menant à cette solution est mémorisé, puis 
la recherche est entièrement défaite dans le but d'ajouter au modèle une nouvelle contrainte de non sur-ensemble,
puis le chemin mémorisé est rejoué pour poursuivre la recherche au point où
elle a été arrêtée.
La figure~\ref{fig:br} montre l'arbre de recherche qui est développé,
il est beaucoup plus satisfaisant que celui obtenu avec relance.

Dans une phase de post-traitement, l'ensemble des siphons minimaux peut être filtré afin de ne garder que les siphons 
minimaux qui contiennent un ensemble donné de places, pour résoudre le
problème NP-difficile sus-mentionné.
Notons que poster l'inclusion de l'ensemble sélectionné de places en premier n'assurerait pas que les siphons trouvés sont minimaux vis-à-vis de l'inclusion ensembliste.

\section{Évaluation}
\label{sec:eval}
\subsection{Banc d'essai Petriweb}
\label{sec:benchmarks1}

Notre premier banc d'essai de réseaux de Petri est le dépôt  
\href{http://www.petriweb.org/}{Petriweb}
\cite{GHPW06icatpn}.

Les instances les plus difficiles sont des études de cas dans des processus de raffinement:
\begin{itemize}
\item \emph{"Transit" case study - transit1 process}, identifiant 1454;
\item \emph{"Transit" case study - transit2 process}, identifiant 1479;
\item \emph{"Transit" case study - transit4 process}, identifiant 1516;
\end{itemize}

\subsection{Banc d'essai Biomodels.net}
\label{sec:benchmarks2}

Nous considérons également l'entrepôt \href{http://www.biomodels.net/}{Biomodels.net}
de modèles de réactions biochimiques \cite{NBBCDDLSSSSH06nar} ainsi que d'autres modèles complexes.
Les modèles les plus grands sont comme les suivants:

\begin{itemize}
\item
la carte de Kohn~\cite{Kohn99mbc,CCDFS04tcs} représente une carte de 509 espèces
et 775
interactions moléculaires qui régissent le
cycle cellulaire des mammifères et le mécanisme de réparation de l'ADN;
\item le modèle 175 qui représente les réponses au Ligand du réseau de signalisation de l'ErbB;
\item le modèle 205 qui représente la simulation de la régulation de l'endocytose de l'EGFR
et la signalisation EGFR-ERK; 
\item le modèle 239 qui représente  
un modèle cinétique du réseau de la sécrétion d'insuline stimulée par le glucose
des cellules bêta du pancréas.
\end{itemize}

\subsection{Résultats et Comparaisons}

%
%
%

Dans\,\cite{Nabli11cp}, l'approche PLC a été comparée à un modèle linéaire en nombres entiers (PLNE) \cite{CFP02ieee} sur le dépôt de \href{http://www.biomodels.net/}{Biomodels.net} elle s'est révélée au moins trois fois plus rapide que l'énumération PLNE en utilisant un solveur CPLEX. 
Cet approche PLC implémente le modèle booléen
décrit dans la section ~\ref{sec:booleanmodel}.
Dans cette section, nous comparons deux implémentations de ce modèle booléen à l'algorithme dédié de l'état de l'art que les mêmes auteurs avaient introduit dans\,\cite{CFP05ieee}.

L'algorithme dédié utilise  récursivement une procédure de partitionnement pour
réduire le problème original à plusieurs
sous-problèmes plus simples.
Chaque sous-problème possède des contraintes spécifiques supplémentaires sur
les places par rapport au problème d'origine.
Dans\,\cite{CFP05ieee} la correction, la convergence et la
complexité de l'algorithme  sont montrées.
L'évaluation expérimentale de la performance est également traitée.

Ces algorithmes peuvent être appliqués pour énumérer les siphons minimaux, les
siphons minimaux par rapport à une place ou aussi les siphons minimaux par
rapport à un sous-ensemble donné de places.

Les tables \ref{table:perf} et \ref{table:perfAVG} fournissent les
performances sur les modèles de Petriweb et Biomodels.net.
Dans la table \ref{table:perf}, nous donnons les temps de calcul de tous les siphons minimaux des modèles présentés dans les
sections \ref{sec:benchmarks1} et \ref{sec:benchmarks2}.
Pour chacun de ces exemples, nous fournissons le nombre de siphons minimaux, le nombre de places, 
le nombre de transitions et les temps de calcul en utilisant 
l'algorithme dédié, l'algorithme SAT et le programme PLC.

Toutes les durées sont en ms. Les temps de calcul sont obtenu en utilisant un
PC avec un processeur intel Core 2.20 GHz et 8 Go de mémoire.

Dans la table \ref{table:perfAVG}, nous considérons l'ensemble des modèles mis à disposition
dans les dépôts de Biomodels.net et de Petriweb.
Pour que ces temps totaux aient un sens,
nous avons retiré le modèle numéro 175 de biomodels.net et le réseau numéro 1516 de Petriweb car les temps mesurés sur ces deux modèles sont à la fois très
grands et non représentatifs du cas général : on se reportera
à la table \ref{table:perf} pour les temps obtenus sur ces deux modèles.
Pour chaque base de données, nous donnons le nombre total de modèles,
le nombre minimal et le nombre maximal de siphons trouvés et la moyenne des
nombres des siphons trouvés dans tous les modèles.
Nous donnons également
la taille minimale et la taille maximale des siphons trouvés ainsi que la
moyenne des tailles sur tous les modèles. 
Finalement, nous fournissons le temps de calcul total qui n'est autre que la somme des temps de calcul de chaque modèle.

\begin{table*}[htb]
\begin{center}
\begin{tabular}{|c|c|c|c|c|c|c|}
\hline
 modèle  & \# & \# &\# & algorithme & sat & GNU \\

&siphons & places & transitions & dédié & &Prolog\\
\hline

carte de Kohn &81 & 509 & 775 & 28 & 1 & 221\\
\hline
BIOMD000000175 & 3042 & 118 & 194 & \(\infty\) & 137000 & \(\infty\)\\
\hline
BIOMD000000205 &32 & 194 & 313& 21 & 1 & 34\\
\hline
BIOMD000000239 &64 & 51& 72 & 2980 & 1 & 22\\

\hline
\hline
1454 & 60 & 39 &48& 15  & 1 & 11\\
\hline
1479 & 168 & 58& 78 &47  & 4 & 220\\
\hline
1516 & 1133 & 68	 & 102 & 2072  & 4739 & 163 248\\
\hline
\end{tabular}
\caption{Performance sur les instances les plus difficiles.\label{table:perf}}
\end{center}
\end{table*}

\begin{table*}[htb]
\begin{center}
\begin{tabular}{|c|c|c|c|c|c|c|}
\hline
Base de données & \# & \# siphons & taille des siphons &
\multicolumn{3}{c|}{temps total} \\
\cline{5-7}
& modèles & min-max (avg.) & min-max (avg.) &algorithme &SAT & GNU\\
& & & & dédié &&Prolog\\
\hline

Biomodels.net & 403 & 0-64 (4.21)& 1-413 (3.10) & 19734 & 611 & 231\\
\hline
Petriweb & 79 & 0-168 (6.24) & 1-24 (3.36) & 2757 & 457 & 240 \\

\hline
\end{tabular}
\caption{Performance sur l'ensemble du banc d'essai.\label{table:perfAVG}}
\end{center}
\end{table*}

Sur toutes les instances, notamment les plus larges, le codage SAT est très efficace.
Il convient parfaitement à la taille du réseau ainsi qu'au nombre de siphons minimaux.
L'algorithme dédié est souvent moins efficace avec au moins un ordre de grandeur
de différence, sauf pour une seule instance, le modèle numéro 1516 de biomodels.net, pour laquelle l'algorithme dédié est environ deux fois plus rapide : cette instance est singulière par son grand nombre de siphons minimaux.
Le codage CLP a également une meilleur performance que l'algorithme dédié, 
mais il semble manipuler moins facilement des réseaux de grande taille tels
que la carte de Kohn, et est très lent sur le modèle 1516.




\section{Instances difficiles}
\label{hardins}

MiniSAT dépasse en rapidité l'algorithme spécialisé d'au moins un ordre de grandeur et le temps de calcul est étonnamment court sur nos exemples pratiques.
Même si le modèle est très grand, par exemple la carte de Kohn du contrôle du cycle cellulaire avec 509 espèces et 775 réactions, le 
temps de calcul demeure infime.
Pourtant, cette énumération de tous les siphons minimaux résout le problème de
décision de l'existence d'un siphon minimal contenant un ensemble donné de
places qui a été prouvé NP-difficile dans\,\cite{YW99eice} et la question est: pourquoi le calcul des siphons minimaux
est-il si facile dans les grands réseaux biochimiques ou dans les réseaux de PetriWeb ?

Une façon d'aborder cette question est de considérer la preuve de
NP-difficulté par réduction du problème 3-SAT
et le phénomène de transition de phase dans 3-SAT.
La probabilité qu'un problème 3-SAT aléatoire soit satisfiable suit une transition de phase aigue quand la densité $\alpha$
du nombre de clauses sur le nombre de variables est au voisinage de 4.26 \cite{MSL92AI,CA93AI}, allant de la satisfiabilité vers l'insatisfiabilité avec une probabilité de $1$ quand le nombre de variables tend vers l'infini. 

La réduction de 3-SAT au problème de l'existence d'un siphon minimal traité dans\,\cite{YW99eice} est
obtenue avec des réseaux de Petri dont la structure est illustrée dans la figure~\ref{fig:3-Sat}.
Il est important de noter que dans ce codage, le réseau de Petri a un degré entrant maximum (pour $q_0$) linéaire par rapport au nombre de clauses et un degré maximal sortant (pour $t_0$) linéaire par rapport au nombre de variables.

Sans surprise, cette famille de réseaux de Petri fournit un banc d'essai difficile pour l'énumération des siphons minimaux.
La Table \ref{table:perfgen} contient les résultats expérimentaux de ces réseaux de Petri générés \footnote{Ce banc d'essai est disponible sur \href{http://contraintes.inria.fr/~nabli/indexhardinstances.html}{ce lien}.}.
Nous considérons un time-out de deux secondes, le symbole "-" signifie que le délai de deux secondes n'a pas suffit pour 
énumérer tous les siphons minimaux.
Les résultats de cette section sont obtenus en utilisant un PC avec un processeur intel Core2 Quad 2.8 GHz et 8 Go de mémoire.
Cette table contient les informations qui concernent le codage 3-SAT et le réseau de Petri correspondant: pour chaque 3-SAT, nous
fournissons le nombre de variables booléennes, le nombre de clauses et le ratio $\alpha$, pour le réseau de Petri correspondant nous donnons le nombre de places, le nombre de transitions et la densité (ratio du nombre de transitions sur le nombre de places).
Le temps de calcul de l'énumération de tous les siphons minimaux en fonction de $\alpha$ est représenté dans la figure \ref{fig:3satcurve}; il est visible que le temps de calcul reste exponentiel autour de la valeur critique 4.26 de $\alpha$.

De même, des réseaux de Petri générés aléatoirement avec 
des degrés linéaires par rapport au nombre de sommets (places et transitions) représentent aussi des instances difficiles. 
Par contre, des réseaux de Petri aléatoires ayant des degrés de l'ordre de $10$ comme c'est le cas des modèles biochimiques sont des instances faciles pour le problème d'énumération des siphons minimaux. 
Ainsi, bien que les modèles de réactions biochimiques soient de grande taille, en terme de nombre de places, leurs degré reste borné
à une petite valeur ce qui explique pourquoi le problème d'énumération des siphons minimaux est si facile en pratique.


\begin{figure*}[htb]

\begin {center}
\begin{tikzpicture}
  \node[place] (q0) at (9, 0) {$q_0$};
  \node[place] (s1) at  (5, 3) {$s_1$};
  \node[place] (s1bar) at (5, 2) {$\bar{s_1}$};
  \node[place] (s2) at  (5, 1) {$s_2$};
  \node[place] (s2bar) at (5, 0) {$\bar{s_2}$};
  \node[place] (s3) at  (5, -1) {$s_3$};
  \node[place] (s3bar) at (5, -2) {$\bar{s_3}$};
  \draw [dotted] (5,-2.5)--(5,-3);0
  \node[place] (sn) at  (5, -3.5) {$s_n$};
  \node[place] (snbar) at (5, -4.5) {$\bar{s_n}$};

\node[place] (r1) at  (2, 3) {$r_1$};
  \node[place] (r1bar) at (2, 2) {$\bar{r_1}$};
  \node[place] (r2) at  (2, 1) {$r_2$};
  \node[place] (r2bar) at (2, 0) {$\bar{r_2}$};
  \node[place] (r3) at  (2, -1) {$r_3$};
  \node[place] (r3bar) at (2, -2) {$\bar{r_3}$};
  \draw [dotted] (2,-2.5)--(2,-3);
  \node[place] (rn) at  (2, -3.5) {$r_n$};
  \node[place] (rnbar) at (2, -4.5) {$\bar{r_n}$};

  \node[transition] (u1) at (7, 1.5) {$u_1$}
	edge[post] (q0)     
     edge[pre] (s1)
     edge[pre] (s2)
     edge[pre] (s3);
  \node[transition] (u2) at (7, 0) {$u_{2}$}
  edge[post] (q0)
     edge[pre] (s1bar)
     edge[pre] (s3)
     edge[pre] (sn);
  \node[transition] (u3) at (7, -1.5) {$u_3$}
  edge[post] (q0)
     edge[pre] (s2bar)
      edge[pre] (s3bar)
       edge[pre] (snbar);
 \draw [dotted] (7,-2.3)--(7,-3);
\node[transition] (ualphan) at (7, -3.5) {$u_{\alpha*n}$}
	edge[post](q0)
	edge[pre](s2bar)
	edge[pre](s3bar)
	edge[pre](snbar);
	
 \node[transition] (y1) at (3.5, 3) {$y_1$}
  	 edge[post] (s1)
     edge[pre] (r1)
     edge[pre] (s1bar); 
  \node[transition] (y1bar) at (3.5, 2) {$\bar{y_1}$}
  	 edge[post] (s1bar)
     edge[pre] (r1bar)
     edge[pre] (s1);
\node[transition] (y2) at (3.5, 1) {$y_2$}
  	 edge[post] (s2)
     edge[pre] (r2)
     edge[pre] (s2bar);
\node[transition] (y2bar) at (3.5, 0) {$\bar{y_2}$}
  	 edge[post] (s2bar)
     edge[pre] (r2bar)
     edge[pre] (s2);
 
  \node[transition] (y3) at (3.5, -1) {$y_3$}
  	 edge[post] (s3)
     edge[pre] (r3)
     edge[pre] (s3bar);    
  \node[transition] (y3bar) at (3.5, -2) {$\bar{y_3}$}
  	 edge[post] (s3bar)
     edge[pre] (r3bar)
     edge[pre] (s3);
     \draw [dotted] (3.5,-2.5)--(3.5,-3);
      \node[transition] (yn) at (3.5, -3.5) {$\bar{y_n}$}
  	 edge[post] (sn)
     edge[pre] (rn)
     edge[pre] (snbar);    
  \node[transition] (ynbar) at (3.5, -4.5) {$\bar{y_n}$}
  	 edge[post] (snbar)
     edge[pre] (rnbar)
     edge[pre] (sn);
     
   \node[transition] (t0) at (-1, 0) {$t_0$}
  	 edge[post] (r1)
     edge[post] (r1bar)
    edge[post] (r2)
     edge[post] (r2bar)
      edge[post] (r3)
     edge[post] (r3bar)
      edge[post] (rn)
     edge[post] (rnbar);
\draw[round, post] (q0)  -- (9, -5.3) --(-1, -5.3)-- (t0);

\end{tikzpicture}
   \end{center}
   \caption{Réseaux de Petri de la réduction 3-SAT}
   \label{fig:3-Sat}
\end{figure*}   

\begin{table*}[htb]
\begin{center}
\begin{tabular}{|c|c|c|c|c|c|c|c|c|}
\hline
model &\# & \multicolumn{3}{|c|}{Petri net view} &\multicolumn{3}{|c|}{3-SAT view} & time\\

\cline{3-8}
  & siphons & \# places & \# transitions& density& \# variables & \# clauses& $\alpha$ & (ms) \\
\hline
pn0.xml & 201 & 801& 401& 0,5 &200	&0&	0&	79\\
\hline
pn0.0.xml &	201	&801	&401	&0,5	&200	&0&	0&	79\\
\hline
pn0.2.xml	&-	&801	&441&	0,56&	200&	40&	0,2&	2000\\
\hline
pn0.6.xml&	-	&801&	521	&0,65&	200&	120&	0,6&	2000\\
\hline
pn1.xml&	-	&801&	601&	0,751&	200&	200&	1&	2000\\
\hline
pn2.xml	&-	&801&	801&	1&	200	&400&	2	&2000\\
\hline
pn3.xml	& -	&801	&1001&	1,24	&200&	600	&3 &	2000\\
\hline
pn4.xml	& -&	801	&1201&	1,49	&200&	800	&4&	2000\\
\hline
pn4.2.xml &	-	&801	&1241&	1,54 &	200&	840	&4,2&	2000\\
\hline
pn4.4.xml &	200	&801 &	1281	&1,59	&200&	880	&4,4&	1596\\ 
\hline
pn4.6.xml&	200	&801&	1321	&1,64&	200&	920&	4,6&	1411\\
\hline
pn5.xml&	200&	801	&1401&	1,74	&200	&1000&	5&	370\\
\hline
pn6.xml	&200&	801	&1601&	1,99&	200&	1200	&6&	175\\
\hline
pn7.xml&	200&	801&	1801&	2,24&	200&	1400&	7	&157\\
\hline
pn8.xml	&200	&801&	2001&	2,49&	200&	1600&	8&	157\\
\hline
pn9.xml	&200&	801&	2201&	2,74	&200	&1800&	9	&133\\
\hline
pn10.xml&	200	&801&	2401	&2,99&	200&	2000&	10&	137\\
\hline
\end{tabular}
\caption{Performance sur le banc d'essai généré.\label{table:perfgen}}
\end{center}
\end{table*}

\begin{figure}[htb]
  \includegraphics[width=\columnwidth]{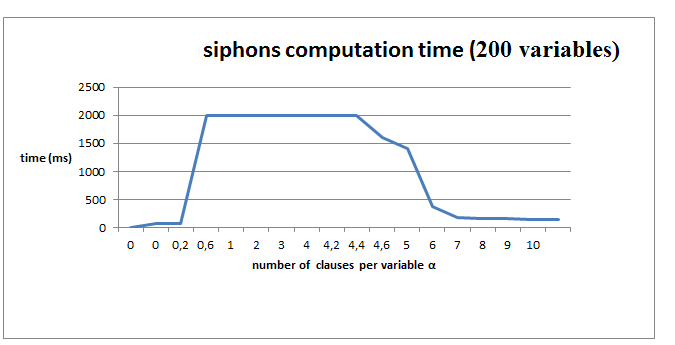}
\caption{\label{fig:3satcurve} le temps de l'énumération de tous les siphons minimaux avec un time-out de 
2 secondes établi une transition de phase au voisinage de la valeur de 4.26 du ratio du nombre de clauses par le nombre de variables.}
\end{figure}

\section{Conclusion}
\label{sec:concl}

Les siphons et les pièges définissent des groupement significatifs de
composants qui exposent un comportement particulier durant l'évolution dynamique 
d'un système biochimique quelles que soient les valeurs des paramètres du modèle.

Nous avons décrit un modèle booléen pour ce problème et nous avons comparé
deux méthodes pour satisfaire ces contraintes de façon à énumérer l'ensemble des
siphons et des pièges minimaux.

Le solveur SAT est le plus efficace sur les réseaux de grande taille.
Le programme PLC(B) est néanmoins plus efficace d'au moins un ordre de grandeur
que l'algorithme dédié de l'état de l'art \cite{CFP05ieee}.

Les deux méthodes sont capables de résoudre tous les problèmes des dépôts de
Petriweb et de \href{http://www.biomodels.net/}{Biomodels.net}
(daté mars 2012) dans un temps court ce qui démontre l'applicabilité de cette
approche.
Notons aussi que le modèle PLC(B) pour calculer les siphons et les pièges
minimaux est une extension du modèle PLC(DF) pour la recherche des P- et 
T-invariants \cite{Soliman08wcb}.

L'efficacité étonnante de ces méthodes appliquées aux modèles biochimiques de biomodels.net a été expliquée en montrant
que le degré des réseaux de Petri est borné dans les modèles biologiques. 

L'idée d'appliquer les solveurs SAT ou la programmation logique par contraintes
aux problèmes classiques de la communauté des réseaux de Petri n'est pas
nouvelle, mais ces approches ont jusque-là été davantage appliquées à la vérification de
modèles. 
Nous sommes persuadés que les problèmes structurels peuvent
également bénéficier du savoir-faire développé dans la communauté de la
modélisation booléenne.
Cela semble être particulièrement intéressant pour la communauté de la biologie
des systèmes, où les progrès technologiques récents nécessitent
de plus en plus d'outils d'analyse puissants et d'expertise pour des problèmes d'optimisation.

\bibliography{contraintes}

\end{document}